\newtheorem{theorem}{Theorem}
\newcommand{\Balken}[2]{\textcolor{#1}{\rule{#2}{0.7em}}}
\definecolor{darkgreen}{cmyk}{0.5,0.1,0.5,0.5}
\definecolor{midgreen}{rgb}{0.0,0.5,0.0}
\definecolor{midred}{rgb}{0.8,0.0,0.0}
\definecolor{midcyan}{cmyk}{1.0,0.0,0.0,0.4}
\definecolor{orange}{cmyk}{0.0,0.8,1.0,0.0}
\begin{document}

\title{Why Majority Judgement is not yet the solution for political
  elections, but can help finding it} \author{Friedemann Kemm}

\maketitle

\begin{abstract}
  Like many other voting systems, Majority Judgement suffers from the
  weaknesses of the underlying mathematical model: Elections as
  problem of choice or ranking. We show how the model can be enhanced
  to take into account the complete process starting from the whole
  set of persons having passive electoral rights and even the aspect
  of reelection. By a new view on abstentions from voting and an
  adaption of Majority Judgement with three grades, we are able to
  describe a complete process for an election that can be easily put
  into legislation and sets suitable incentives for politicians who
  want to be reelected.
\end{abstract}

\tableofcontents

\section{Introduction}
\label{sec:einleitung}

Although in the last decades many alternative voting methods were
proposed, plurality voting is still in widespread use. While plurality
voting is the best method for proper binary decisions like guilty/not
guilty, motion/proposal accepted/rejected, etc., it has significant
flaws for decisions with more options.
This is especially true for runoffs,  as can be seen, e.\,g.,
by an example Balinski and Laraki presented
in~\cite{balinski:hal-01137173}: Candidates A and B are rated in a
poll with grades \textcolor{midgreen}{Very Good},
\textcolor{midcyan}{Acceptable}, or \textcolor{midred}{Poor},

\begin{NiceTabular}{ll}
  A & \textcolor{green}{\rule{4em}{0.7em}}
  \textcolor{cyan}{\rule{3.5em}{0.7em}} \textcolor{red}{\rule{2.5em}{0.7em}}
  \\
  B & \textcolor{green}{\rule{3.5em}{0.7em}}
  \textcolor{cyan}{\rule{3em}{0.7em}} \textcolor{red}{\rule{3.5em}{0.7em}}
\end{NiceTabular}

where the bars represent fractions of votes. In this case, we would
expect A to win the vote. But that is not guaranteed since the votes
could be distributed as follows:

\begin{NiceTabular}{ll}
  A & \Balken{green}{.5em} \Balken{green}{3.5em}
  \Balken{cyan}{3.5em} \Balken{red}{2.5em}
  \\ 
  B & \Balken{cyan}{.5em} \Balken{red}{3.5em}
  \Balken{green}{3.5em} \Balken{cyan}{2.5em}
\end{NiceTabular}

But the situation is even worse than Balinski and Laraki paint it. The
distribution might be as follows:

\begin{NiceTabular}{ll}
  A & \Balken{green}{3.5em} \Balken{green}{0.5em} \Balken{cyan}{2.5em}
  \Balken{cyan}{1em} \Balken{red}{2.5em}
  \\
  B & \Balken{green}{3.5em} \Balken{cyan}{0.5em} \Balken{cyan}{2.5em}
  \Balken{red}{1em} \Balken{red}{2.5em}
\end{NiceTabular}

In this case, we know that 15\% of the voters prefer A over B. Of the
other 85\%, we know nothing. It would even be possible that all of
them think that both candidates are equally suited for the office. But
they still have to decide between the candidates when presented with
the ballot. Their decision might depend on anything basically
irrelevant like the quality of their morning coffee, the mood of their
spouse, the flip of a coin, etc. Thus, a runoff comes close to a
lottery\footnote{Actually, in ancient Athenian democracy, selecting
  persons for offices by lot was an accepted
  method~\cite{mclean2015strange}.} in some cases and, hence, cannot
be considered a proper binary decision even if both candidates are
considered acceptable by a majority.

This flaw is inherited by all preferential voting systems. Therefore,
it is not surprising that legislature always hesitated to adopt any of
the alternatives suggested in the last decades. Actually, the
beginning of Social Choice, the science of collective decision making,
as a discipline and theory on its own consists of collecting flaws and
discouraging results, most notably the impossibility results by
Arrow~\cite{arrow1950difficulty,arrow2012social},
Gibbard~\cite{gibbard-1973}, and
Satterthwaite~\cite{satterthwaite-1975}.  Regarding democratic
elections, the most important flaws are the No Show Paradox, which
might discourage voters from voting, the dependence on irrelevant
alternatives, which might discourage parties from proposing
candidates, and manipulability, which might discourage voters from
expressing their opinion honestly. The effect of the latter is also
known as strategic or tactical voting. Originally, the research
focused on ranked voting systems. But even early grading based voting
systems such as score methods, where voters could freely assign points
in a certain range to candidates, turned out to be unsatisfying as
they are heavily susceptible to strategic voting. An improvement was
made by Majority Judgement~\cite{book-mj}, where points are replaced
by a merely ordinal scale of grades. While all grading based methods
avoid the dependency on irrelevant alternatives, Majority
Judgement~(MJ) also encourages voters to express their opinion
honestly.

In this study, we point out that a major issue of the applied
theoretical framework is its restricted angle of view. Most often,
elections are considered as a problem of Social Choice. At the one
hand, this is convenient since for Social Choice a full theory
exists~\cite{taylor-pacelli,taylor2005social,sen2018collective,gaertner2009primer,book-mj,zbMATH06414025,zbMATH06414059},
and there are many real world problems to which this theory was
successfully applied\footnote{The problems
  Arrow~\cite{arrow1950difficulty} originally had in mind where very
  different from electing people for political offices.}. On the other
hand, it falls short of providing a guideline for good election
procedures. Most often, it is only applied to the voting at the end of
the selection process. But as illustrated in
Figure~\ref{fig:participation}, the major part of eliminating people
with passive rights of voting from the set of choices is done before
the complete electorate is asked to vote. Furthermore, the available
choices are human beings who may react to the results of the vote and
the election procedure. Considering the fact that the election has to
be repeated every few years, this means that the incentives set by the
election method have a severe impact on the politics between
elections.

\begin{figure}
  \centering
  \includegraphics[width=.5\linewidth]{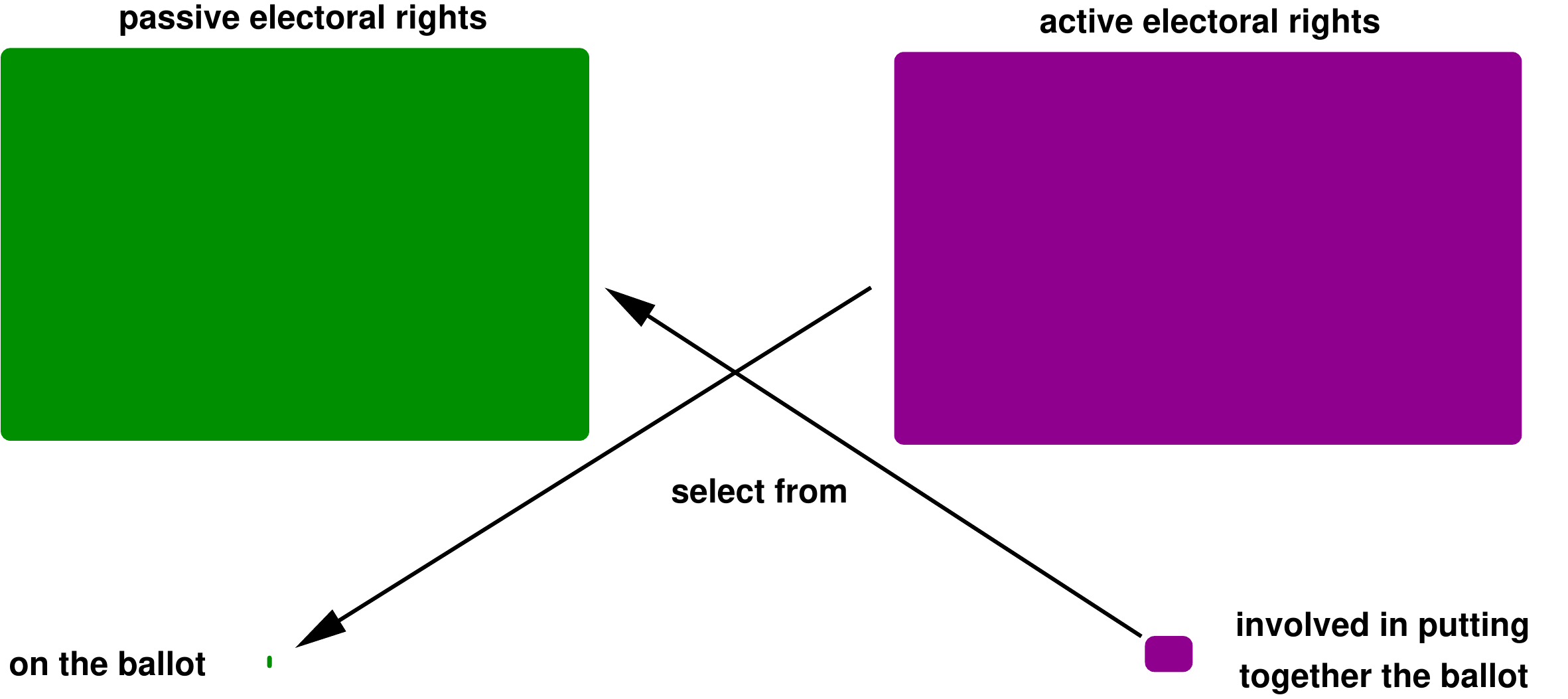}
  \caption{Only a minority makes up the ballot.}
  \label{fig:participation}
\end{figure}

We want to stress that the considerations in this study are solely
focused on democratic elections, electing human beings who can react
to incentives and are typically elected for a short term of only a few
years, but with the possibility of reelection. There are many problems
of collective decision making where our approach would not be the best
process.

Our study is organized as follows: First, we discuss the ways in which
the mathematical model for political elections should be
broadened. For that purpose, we take a closer look at the medieval
contributions by Ramon LLull and Nicolaus Cusanus and how their
methods were fitted to the circumstances. We then present three
criteria an election method for modern democracy should satisfy and
investigate how far a method based on binary decisions would get us.
In Section~\ref{sec:how-major-judg}, we discuss aspects of Majority
Judgement that help us to find a new procedure that satisfies above
criteria. Finally, in Section~\ref{sec:new-view-abstentions}, we
introduce an election procedure that fulfills above criteria. The key
is a new view on abstention from voting combined with the idea that
the final election consists of simultaneously deciding on motions.

\section{Why modelling elections as a problem of choice is
  insufficient}
\label{sec:why-modell-elect}

In this section, we want to investigate which criteria a modern
election procedure should satisfy. Therefore, we describe two medieval
suggestions in order to replace the unsatisfying plurality vote. It
will become clear that back then considering the election as a problem
of choice was still sufficient. But in the meantime, the situation has
changed in two major ways: Since every citizen above a certain age has
passive rights of election, the number of eligible persons is much
higher. Furthermore, available positions are not for a lifetime but
for a certain term, usually with the possibility for reelection. While
the standard theory of social choice does not take into consideration
the latter, the issue of the increased number of eligible persons is
\enquote{solved} in a somehow dirty way: the theory is only applied to
the voting and excludes the process that composes the ballot. Thus,
the problem considered by the theory of social choice is overly
narrowed down and we should look for a broader framework, starting
again from the medieval suggestions.

\subsection{How medieval approaches were superior to our modern
  approach}
\label{sec:how-medi-appr}

Here, we start with Nicolaus Cusanus, since the set of alternatives
from which to choose and the electorate were much smaller than for
Ramon LLull. Both developed methods in the area of preferential
voting\,/\,ranked voting. 

\subsubsection{Nicolaus Cusanus}
\label{sec:nicolaus-cusanus}

In his 1433/34 work \emph{De concordantia catholica}\footnote{The
  whole work can be found, e.\,g., at
  https://urts99.uni-trier.de/cusanus/content/werke.php?werk=9.},
among other things, Cusanus made a suggestion for the election of the
king of the Holy Roman Empire (HRE). Although not explicitly written
down in the constitution (Golden Bull of 1356), it was consensus that
the king had to be one of the prince electors, the
electorate\footnote{There were a few exceptions during history.}. This
board consisted of seven people, which leaves us with a rather small
problem. Thus, it was easy to perform a method different from
plurality voting, which by then was the standard\footnote{\dots and
  since they did not adopt the new method, stayed the same until the
  end of the HRE in 1806.}. Each prince elector would now be given
ballots with the names of the prince electors\footnote{Cusanus
  suggested seven ballots, each with only one name.} and would have to
rank them. The ranks then would be given points and the winner be
decided by summing up the points given by all electors. A modification
of this method became later known as Borda voting or Borda count. The
main difference to the Borda method is the ballot: Nicolaus Cusanus
had all eligible people on the ballot. For modern democracy, this
would be impractical, since in some cases the ballot would have to
feature millions of names. The selection process needed to narrow down
the ballot is usually ignored in order to apply methods provided by
the theory of Social Choice. But this way, the wrong problem is
treated.

\subsubsection{Ramon Llull}
\label{sec:ramon-llull}

Ramon Llull went one step further than Cusanus: In his 1299 treatise
\emph{Ars Electionis}\footnote{In\cite{hagele2001lulls}, a
  reproduction of the relevant passages together with an English
  translation is presented.}, he described in his work a process for
the election of an abbess for a nuns' monastery. They
started with all 20 sisters who were eligible.
By some form of plurality voting, they narrowed down the
group to seven sisters. In order to cope with the weaknesses of
plurality voting, they now had to use plurality voting to add to their
group two more sisters. These nine sisters then had to
perform\footnote{Starting from that point, the description by Llull is
  somehow inconsistent, since the two additional sisters are suddenly
  forgotten.}  what nowadays is known as Copeland's method, a
form of the Condorcet method.

What is most notable about this process is that he considers the
complete selection process starting with the complete set of persons
having passive rights of voting. Furthermore, it shows that Llull was
fully aware of the difficulties large sets of possible choices pose on
more elaborate methods of preferential voting. His solution at this
point was for the first steps to return to plurality voting. For
modern democratic elections, this is still not feasible. We again
would have one step with possibly millions of names on the ballot.

\subsection{How terms and the possibility of reelection introduced a
  new challenge}
\label{sec:how-terms-poss}

For modern elections, we have to deal with even more issues than the
people in middle ages: In modern democracy there are terms, which
means that there is no lifetime appointment. We also have to deal with
evolution in time, since the voters attitude towards the candidates
might change depending on their performance during former terms. If
for example, we employ approval
voting~\cite[Section~1.8]{taylor-pacelli}, the elected person would
improve their chances for reelection by just avoiding angering
anyone. Since every bold move or decision bears the danger of angering
voters, politics would be in some sense paralyzed. This shows that the
election method has a severe impact on politics and the decision
making of the elected people.

It is worth noting that this is relevant even for lifetime positions:
It is not the choice between possible dishes for the next birthday
party or destinations for future holidays. These possible choices
cannot act on themselves. But people hoping to get elected can act
themselves and with their behavior and decision making in other
positions influence the voters' stance on them.

\subsection{What a modern approach should take into account}
\label{sec:what-modern-approach}

From our discussion of the medieval contributions of Cusanus and
Llull, we can conclude the following desirable properties for modern
election processes and the mathematical modelling thereof:
\begin{description}
\item[Completeness] Take into account the complete process, starting with the
  complete set of eligible persons, i.\,e.\ with all people having
  passive electoral rights. 
\item[Feasibility] Divide the process into steps that are feasible from an
  organizational point of view. 
\item[Appropriate Incentives] Take into account the incentives the
  method sets for the candidates, especially those already in office.
\end{description}
Only considering all three aspects together allows for the
construction of an election process that will be fully accepted by
both, legislature and the voters.

\subsection{How a procedure based on proper binary decisions would look like}
\label{sec:how-procedure-based}

Before we dive into modern methods, let's consider what is possible
using classical binary decisions. However, as mentioned above, the
Condorcet method,
(a)~is based on of runoffs and, thus, non-proper binary
decisions, and~(b) is not feasible due to the large number of persons
having passive suffrage in modern democracy.
Nevertheless, a possible procedure may be comprised of following steps:
\begin{enumerate}
\item Let the voters decide if the ballot is acceptable, or if all
  candidates would need to be rejected. 
\item If the ballot is rejected, search for new candidates.
\item Otherwise, let the voters decide if the upper or lower half of
  the ballot offers the better choice.
\item Narrow the ballot down to the preferred half and repeat these
  steps until the preferred part contains only one candidate. 
\end{enumerate}
In this way, an election could be broken down to a series of
decisions, most of which are proper binary decisions. \enquote{Most}
because the later steps are close to or even a runoff themselves,
which, as we have seen in Section~\ref{sec:einleitung}, is not a
proper binary decision. The first step, however, can in any
case be considered a proper binary decision. Another flaw is that
in each step a bias exists towards the larger part of the
partitioned ballot. This might be overcome by putting one name in both
parts, which, on the other hand, would lead to a bias towards this
candidate. But instead of a further discussion on that point, we want
to check to what extent this procedure would satisfy the above
criteria
\begin{description}
\item[Completeness] Since the first step is modelled after the
  consecutive steps by dividing the set of people with passive
  electoral rights into two parts, namely those on the ballot and
  those not on the ballot, it takes into account the complete set of
  possible choices and, thus, leads to a complete method.
\item[Feasibility] If one would like to perform this in only one round
  of voting, at least if the ballot is accepted, the complexity for
  the voter would then increase with the number of names on the
  ballot. That is, if there are seven names on the ballot, each voter
  would have to first indicate if they approve the ballot. Then they
  would have to indicate which part they prefer, the first part with
  four names or the second part with three names.

  Next, the same for the halves of both the upper and the lower part,
  the latter comprised of one with two names and another one with only
  one name. And then for each of the three pairs left, they would
  indicate which person they would prefer. This means that each voter
  has to place exactly seven ticks or crosses (depending on the
  tradition) on the ballot. While the number of ticks/crosses might be
  acceptable, the pattern how these need to be placed might be
  confusing to some voters. Thus, there is still room for improvement.
\item[Appropriate Incentives] The method does not encourage the
  candidates to behave in a certain way. Neither does it prefer
  candidates that have shown decision-making-power nor does it prefer
  the hesitant. In this respect, it is probably the most neutral
  method one could imagine. While sometimes this might be desirable,
  there are situations where this is not the optimal strategy. Here is
  hope for some improvement although that is not an easy task. 
\end{description}
A major reason why this method is not recommended is the danger of
repeatedly getting the whole ballot rejected which could lead to long
periods of paralysis or even an unwanted person still being in office
while they should already have been replaced by someone else. Note
that this issue cannot be resolved by replacing steps~3 and~4
with another method from Social Choice.

\section{How Majority Judgement introduced a new and useful point of
  view on the problem}
\label{sec:how-major-judg}

Most traditional voting methods require the voters to vote on
individual persons. This means that we have to deal with nominally
scaled attributes. Plurality voting would mean selecting the modal
value. Quite recently, Balinski and Laraki introduced a new
method~\cite{book-mj}, where electing is only a secondary aspect:
Majority Judgement (MJ). In this method, the voters choose grades for
all candidates separately.  Since grades are ordinally scaled, we can
adopt the much more significant median of the grades over the modal
value.  One of the first applications of the method was wine tasting,
where the focus is not on a competition but on the grades for the
wines themselves. With a suitable tie breaking rule, this can also be
used to rank and to elect candidates.

\subsection{Three improvements brought in by Majority Judgement}
\label{sec:two-impr-brought}


The first and most important improvement brought in by Majority
Judgement is the naturally fixed set of possible choices: the
available grades. For ranked voting, the number of possible choices
can vary wildly, for a complete method in the above sense even up to
many millions so that it has to be narrowed down in a first step. On
the other hand, deciding on votes is always feasible if a suitable set
of grades is offered to the voters.

A second improvement is that by judging on grades, there is no
dependency on irrelevant alternatives. Grades cannot withdraw their
existence on the ballot.


A third improvement is the higher quality of scaling. Instead of
nominal scaling, we now deal with ordinal scaling. It is obvious that
a better grade than the median would be rejected by a majority, a
worse grade would be objected by at least half of the electorate. This
removes many of the flaws and drawbacks of plurality voting (like
voting on the name of a project, street, \dots).

While at first Majority Judgement was only intended to give grades as
in wine degustation, where the final grades are later printed on the
labels of the readily bottled wine, we here discuss a problem of
ranking and electing. In fact, Balinski and Laraki offer a tie
breaking rule that allows to rank the candidates with equal grade and,
thus, give a method for electing: by choosing the winner of the
ranking.

\subsection{The favorable properties of MJ with three grades}
\label{sec:favor-prop-mj}

While Balinski and Laraki originally favored a greater number of
available grades, since this would increase the amount and quality of
information collected from the voters, other authors pointed out some
weaknesses of that approach. One of them we illustrate in the
following example: A group of 21 middle school students have to decide
on the destination for a school outing. Two of the possible choices
would be the high ropes course and the zoo. Assume that in the group
are ten eager and enthusiastic students, ten who are the exact
opposite, and one student more on the neutral side. Then the outcome
of Majority Judgement with four grades, \textcolor{midgreen}{Cool!},
\textcolor{midcyan}{Nice}, \textcolor{darkgray}{Ok}, and
\textcolor{midred}{Help, no!}, might be

\begin{NiceTabular}{lccc}
  & enthusiastic & neutral & no motivation \\
  \dots & \dots & \dots & \dots\\
  \textbf{H} & \textcolor{green}{\rule{10em}{1.3em}} &
  \textcolor{gray}{\rule{1em}{1.3em}} &
  \textcolor{gray}{\rule{10em}{1.3em}} \\
  \textbf{Z} & \textcolor{cyan}{\rule{10em}{1.3em}} &
  \textcolor{cyan}{\rule{1em}{1.3em}} &
  \textcolor{red}{\rule{10em}{1.3em}} \\
  \dots & \dots & \dots & \dots
\end{NiceTabular}

Although everybody except one person would prefer the high ropes
course, the zoo gets the better majority grade. If at least one of the
eager students had not shown up, or if two of the non enthusiastic
students had not shown up, the high ropes course would instead get the
better majority grade. This means that MJ suffers from the so called
No Show Paradox\footnote{The No Show Paradox, first described by
  Fishburn and Brams~\cite{doi:10.1080/0025570X.1983.11977044}, is a
  weakness of some voting systems in which it is possible that a group
  of voters might be better of when they had abstained from voting. It
  is automatically prevented by using score methods, i.\,e.\ methods
  where candidates are given points by the voters. The No Show Paradox
  is typical for other
  methods~\cite{MOULIN198853,felsenthal2019voting}. For example, in
  the presence of a large electorate, Condorcet voting with more than
  three alternatives would suffer from the paradox.}. Since it is
obvious that with less grades available situations where preferences
and majority grades conflict are fewer, we should next consider
Majority Judgement with three grades, for which the absence of the No
Show Paradox is already proven~\cite{mjvsav}. Note that lowering the
number of grades to two would resemble approval voting which, as we
have discussed in Section~\ref{sec:how-terms-poss} above, is a poor
choice due to the incentives it sets.

Hence, in the remainder of this section, we assume a scale with the
three grades \emph{positive, neutral, negative}. Although, as we will
see later in Section~\ref{sec:new-view-abstentions}, this is not the
best scale for application to democratic elections, it helps
illustrating the properties of Majority Judgement with three grades.
In~\cite{mjvsav}, Balinski and Laraki also show how in this case the
presentation can be done without explicitly dealing with the majority
grades, but instead with a score. While they formulate the score with
values relative to the total number of votes, for the sake of
convenience, we use the raw form of the score:
\begin{equation}
  \label{eq:1}
  S =
  \begin{cases}
    N_{\text{positive}} & \text{if}\quad N_{\text{positive}} >
    N_{\text{negative}}\;,\\
    - N_{\text{negative}} & \text{otherwise}\;. 
  \end{cases}
\end{equation}
Here, we still might have ties. These are broken by means
of~\(N_{\text{negative}}\) for~\(S_A = S_B > 0\)
and~\(N_{\text{positive}}\) for~\(S_A = S_B \leq 0\).
This can be executed by a tie breaking score, which is in some
sense dual to the score itself:
\begin{equation}
  \label{eq:2}
  T =
  \begin{cases}
    - N_{\text{negative}} & \text{if}\quad N_{\text{positive}} >
    N_{\text{negative}}\;,\\
    N_{\text{positive}} & \text{otherwise}\;. 
  \end{cases}
\end{equation}

The only possible ties left would be the cases where A and B have
identical results. One advantage of resorting to the raw score is that
the theoretical proofs are much simpler, which means, for instance,
that the absence of the No Show Paradox is easily seen:

Assume that we have~\(\tilde S_A > \tilde S_B\) before the last voter
takes action and that the last voter grades~A better than~B. Then it
is obvious that~\(S_A \geq \tilde S_A\) and~\(S_B \leq \tilde S_B\),
which implies~\(S_A > S_B\). The situation
with~\(\tilde S_A = \tilde S_B\) and~A the winner of the tie break, is
treated similarly.

Using the raw form, we can prove a weak form of consistency. We recall
that consistency means that if the electorate is divided in two parts
which both elect the same winner, then this candidate is also the
winner of the complete vote. This is obviously satisfied for all
traditional score methods, i.\,e. methods where voters assign in some
way points to candidates, and the points are then summed up for each
candidate over all votes. The most prominent example would be
plurality voting, where each voter can assign exactly one point to
exactly one candidate. Smith\cite{smithpref} proved that among all
ranked voting systems, the score methods are also the only ones
satisfying the consistency criterion\footnote{To be more precise,
  Smith used a generalized definition of score methods, where also
  secondary, tertiary, etc.\ votes (for tie-breaking) are taken into
  consideration.}. Condorcet, Instant Runoff, Coombs, etc.\ all
violate consistency. This might also be one of the reasons,
legislature never really took up these methods.

As equation~\eqref{eq:1} shows, MJ with three grades is close to the
score methods, the main difference being that we have positive points
(for positive votes) and negative points (for negative votes) that are
not naively summed up. Dependent on the difference between positive
and negative votes only the positive or only the negative points are
summed up. But we still can prove the following weaker form of
consistency:
\begin{theorem}\label{thm-cons}
  Assume the electorate is divided in two parts, each of which would
  elect candidate~A. If in both parts, the score of~\(A\) has the
  same sign, i.\,e.~\(S_A^{(1)} \cdot S_A^{(2)} > 0\)
  or~\(S_A^{(1)} = S_A^{(2)} = 0\), then~A is the winner of the
  vote.
\end{theorem}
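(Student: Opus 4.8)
The plan is to reduce the notion of ``winner'' to a single lexicographic comparison and then to exploit additivity of the raw counts across the two parts. Write $p_X^{(i)}$ and $n_X^{(i)}$ for the numbers $N_{\text{positive}}$ and $N_{\text{negative}}$ of a candidate $X$ in part $i$, and $p_X = p_X^{(1)} + p_X^{(2)}$, $n_X = n_X^{(1)} + n_X^{(2)}$ for the combined counts. Unwinding the tie-break in~\eqref{eq:1}--\eqref{eq:2} shows that in every case $X$ beats $Y$ exactly when $(S_X, T_X)$ exceeds $(S_Y, T_Y)$ lexicographically; in particular a candidate in the positive regime ($p > n$, so $S = p > 0$) always beats one in the non-positive regime ($p \le n$, so $S = -n \le 0$). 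Since $A$ wins both parts with scores of equal sign, I would split into the three cases $S_A^{(1)} = S_A^{(2)} = 0$, both scores negative, and both scores positive.

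The zero and negative cases go through cleanly by the same mechanism. If $S_A^{(i)} \le 0$, then no opponent $B$ can lie in the positive regime in part $i$: a positive $B$ would have $S_B^{(i)} > 0 \ge S_A^{(i)}$ and would beat $A$, contradicting that $A$ wins part $i$. Hence every opponent is non-positive in both parts, so $p_B \le n_B$ and $B$ stays non-positive in the combined vote, where the ranking is governed first by $-n$ and then by $p$. Because $A$ wins each part, I get $n_A^{(i)} \le n_B^{(i)}$ in both parts, with the tie $n_A^{(i)} = n_B^{(i)}$ forcing $p_A^{(i)} \ge p_B^{(i)}$; summing over the parts yields $n_A \le n_B$, and in case of equality the per-part ties sum to $p_A \ge p_B$. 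Thus $A$ beats every $B$ combined, which settles these cases.

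The both-positive case is where I expect the real difficulty, and it is exactly the point where the ``non-naive summation'' of the score bites. For an opponent $B$ that happens to be positive in \emph{both} parts the argument mirrors the one above: the per-part inequalities $p_A^{(i)} \ge p_B^{(i)}$ sum to $p_A \ge p_B$ (with equality propagating to $n_A \le n_B$), and both candidates remain positive combined, so those opponents are handled. The obstacle is an opponent $B$ that is non-positive in one part---there $A$ beats it automatically because $S_A^{(i)} > 0 \ge S_B^{(i)}$, so that part places no bound whatsoever on $p_B^{(i)}$---yet gathers so many positive votes there that $B$ turns positive in the combined vote and overtakes $A$ on the combined positive count. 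Unlike the negative case, a positive winner does not force its opponents into a single regime, so additivity of the positive counts cannot be invoked uniformly.

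Consequently, the crux of the proof is to show that the sign hypothesis on $A$ excludes such an opponent, and this is the step I would scrutinise most carefully. The two natural routes are a grade-reversal duality that would trade the positive case for the already-settled negative one, or a direct re-examination of whether the hypothesis $S_A^{(1)} \cdot S_A^{(2)} > 0$ on $A$ \emph{alone} is strong enough; I would want to check the latter against a potential opponent of the above type before committing to the argument, since a hypothesis also constraining the competitors, or a restriction to the $S_A^{(i)} \le 0$ regime, may be what actually makes the statement go through.
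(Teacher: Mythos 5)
Your analysis retraces the paper's own proof, carried out more carefully, and your refusal to commit to the both-positive case is vindicated: the statement as written is false, so the step you flagged cannot be filled in by any argument. Concretely, take two parts of 200 voters each. In part~1, candidate~A gets 10 positive and 190 neutral grades, so \(S_A^{(1)}=10\), while B gets 100 positive and 100 negative grades, so by~\eqref{eq:1} B falls into the non-positive regime with \(S_B^{(1)}=-100\), and A wins the part. In part~2, A gets 50 positive and 150 neutral grades (\(S_A^{(2)}=50\)), B gets 30 positive and 170 neutral (\(S_B^{(2)}=30\)); again A wins. The hypothesis of Theorem~\ref{thm-cons} holds, since \(S_A^{(1)}\cdot S_A^{(2)}=500>0\). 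Combined, however, A has 60 positive and no negative votes, so \(S_A=60\), while B has 130 positive against 100 negative votes, so B crosses into the positive regime and \(S_B=130>60=S_A\): B, not A, wins the full vote. This is exactly the opponent you described---B's 100 positive votes in part~1 are masked there by the 100 negative ones and are unlocked in the union.

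For comparison, the paper's proof of the positive case asserts
\begin{equation*}
  S_A = S_A^{(1)} + S_A^{(2)} > \max\{S_B^{(1)},0\} +
  \max\{S_B^{(2)},0\} \geq S_B\;,
\end{equation*}
and it is the second inequality that fails for a regime-crossing opponent: if \(S_B^{(1)}\leq 0\) but \(p_B>n_B\) overall, then \(S_B=p_B^{(1)}+p_B^{(2)}\), and the term \(\max\{S_B^{(1)},0\}=0\) accounts for none of \(p_B^{(1)}\) (in the example above, \(0+30=30\not\geq 130\)). So the paper glosses over precisely the case you isolated. Your proposed repair is also the correct one: the consistency statement becomes true if one additionally requires that \emph{every} candidate's two partial scores have the same sign---then no candidate changes regime in the union, the score is additive regime by regime, and your argument closes all cases (positive-in-both opponents by summing positive counts, non-positive-in-both opponents because they stay non-positive combined). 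Presumably this stronger hypothesis on all candidates is what the paper's informal remark about \enquote{reasonable partitions} intends, but it is not what the theorem states, and it is not implied by the condition on A alone.
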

\begin{proof}
  Since~\(S_A^{(1)} = S_A^{(2)} = 0\) is only possible if all voters
  grade candidate~A neutral, this case is trivial.
  
  If~\(S_A^{(1)}, S_A^{(2)} > 0\) and~\(S_A^{(1)} > S_B^{(1)},\,
  S_A^{(2)} > S_B^{(2)}\) for any other candidate~B, then
  \begin{equation*}
    S_A = S_A^{(1)} + S_A^{(2)} > \max{S_B^{(1)},0} +
    \max{S_B^{(2)},0} \geq S_B\;.
  \end{equation*}
  The case~\(S_A^{(1)}, S_A^{(2)} < 0\) is even simpler since then
  all candidates have negative scores, and we find
  \begin{equation*}
    S_A = S_A^{(1)} + S_A^{(2)} > S_B^{(1)} +
    S_B^{(2)} = S_B\;.
  \end{equation*}
  In both cases,~A is the winner of the vote.  The remaining case,
  the existence of another candidate~B
  with~\(S_A^{(1)} = S_B^{(1)},\, S_A^{(2)} = S_B^{(2)}\), can be
  treated in a similar manner by using the tie breaking score~\(T\)
  instead of~\(S\). 
\end{proof}
Thus, for all reasonable partitions of the electorate consistency is
satisfied. Note that for preferential voting systems, there is no
possibility to distinguish between reasonable and unreasonable
partitions. An immediate consequence is that situations like the
decision between two destinations for a school outing as described
above, where nearly everybody grades the high ropes course higher than
the zoo, but the zoo wins over the high ropes course, rarely
occur. If, for example, we lump together \emph{nice} and \emph{OK},
and we let the student who had decided the vote against the high ropes
course grade more extreme---the high ropes course negative and the zoo
positive---then the overall vote would look like

\begin{NiceTabular}{lccc}
  & enthusiastic & strange guy & no motivation \\
  \dots & \dots & \dots & \dots\\
  \textbf{H} & \textcolor{green}{\rule{10em}{1.3em}} &
  \textcolor{red}{\rule{1em}{1.3em}} &
  \textcolor{gray}{\rule{10em}{1.3em}} \\
  \textbf{Z} & \textcolor{gray}{\rule{10em}{1.3em}} &
  \textcolor{green}{\rule{1em}{1.3em}} &
  \textcolor{red}{\rule{10em}{1.3em}} \\
  \dots & \dots & \dots & \dots
\end{NiceTabular}

The votes for the single alternatives could then be reordered as

\begin{NiceTabular}{lc}
  \dots & \dots \dots \dots\\
  \textbf{H} & \textcolor{green}{\rule{10em}{1.3em}} 
  \textcolor{gray}{\rule{10em}{1.3em}} 
  \textcolor{red}{\rule{1em}{1.3em}} \\
  \textbf{Z} & \textcolor{green}{\rule{1em}{1.3em}}
  \textcolor{gray}{\rule{10em}{1.3em}} 
  \textcolor{red}{\rule{10em}{1.3em}} \\
  \dots & \dots  \dots \dots
\end{NiceTabular}

giving both alternatives the new middle grade as majority grade and by
the tie breaking rule ranking the high ropes course above the
zoo. Using above defined score, we find~\(S_H = 10 > 0\) and
~\(S_Z = -10\). If we partition the group, our electorate, the only
possibilities for a part that gives H a negative score would be the
strange guy plus at most one enthusiastic student and some of the
others. But then the remaining part would give H a positive
score. Thus, the only partitions with
~\(S_H^{(1)} \cdot S_H^{(2)} > 0\) automatically
yield~\(S_H^{(1)}, S_H^{(2)} > 0\) and~\(S_Z^{(1)}, S_Z^{(2)} < 0\)
ranking the high ropes course above the zoo in both parts. The
conditions of Theorem~\ref{thm-cons} are now exactly fulfilled, and as
we have seen, overall the high ropes course is ranked above the
zoo. It is also easy to see that even if we expand the group of
strange students up to eight, we still can partition the group into
two parts, both giving H a positive score and ranking it above Z.


\subsection{Why other tie breaking rules for Majority Judgement are
  not suited to our purpose}
\label{sec:why-other-tie}

The above defined scores,~\(S\) as defined in equation~\eqref{eq:1}
and the tie breaking score~\(T\) as defined in equation~\eqref{eq:2},
result from the original tie breaking procedure for Majority Judgement
by Balinski and Laraki. Of course, other tie breaking rules for
alternatives with the same majority grade could be applied, e.\,g.\
those presented in~\cite{Fabre2021}. For the above used three grade
scale \emph{positive, neutral, negative} these rules would lead to the
(relative) scores
\begin{align}
  \label{eq:3}
  d & =  \frac{N_\text{positive}-N_\text{negative}}{N_\text{total}}\;,
      \tag{relative difference} \\[\medskipamount]
  s & =
      \frac{N_\text{positive}-N_\text{negative}}{N_\text{positive}+N_\text{negative}}\;,
      \tag{relative share} \\[\medskipamount]
  n & = \frac{N_\text{positive}-N_\text{negative}}{N_\text{neutral}}\;,
      \tag{normalized difference}
\end{align}
Considering incentives, the first two have some similarities with
approval voting: avoiding negative votes at any cost is a strong
strategy. Thus, they are not suited for our purpose. The normalized
difference is much closer to the original tie breaking rule of
Majority Judgement, but lacks its simplicity. Furthermore, unlike the
original rule, it is not strategy proof.  As we will see, the original
rule represented by the scores discussed in
Section~\ref{sec:favor-prop-mj} sets good incentives for the
candidates.  Thus, there is no need for a replacement. Still, a close
look at~\cite{Fabre2021} will reveal that there are many applications
in other collective decision making problems where one or the other of
these rules is favorable. An example would be sports competitions like
figure skating and MJ with a higher number of available grades. But
for democratic elections, it is best to stick to the original tie
breaking rule (and, of course, three grades).

\section{A new view on abstentions from voting as the key to the
  solution}
\label{sec:new-view-abstentions}

While Majority Judgement did a great step in the right direction, it
still is not yet what we would need. Firstly, it does not take into
account the process before the ballot is presented to the
voters. Secondly, it is not completely clear what the minimum
requirements are for a candidate to be elected. As a remedy, we shift
our point of view. Instead of considering the election process as a
problem of choice, we now consider it as deciding on multiple motions
simultaneously.

\subsection{Simultaneously deciding on motions instead of choosing}
\label{sec:simult-decid-modt}

If an issue arises that calls for legislature to take action, i.\,e.\
to pass a new law, usually several formulations are proposed by
different parties; sometimes even several versions by a single
party. The parliament then has to decide on these motions
sequentially, usually starting with the version with the most
extensive consequences. Once one of these motions is approved, the
remaining motions are obsolete, and the decision is made.

If we now think of the ballot as a list of motions \enquote{Elect A
  for the office}, \enquote{Elect B for the office}, \dots, then we
might decide on them simultaneously. Among those candidates who would
be approved, we then choose a winner by means of social choice. For
this last step, we suggest a version of Majority Judgement since it is
compatible with the first step, namely deciding who is
acceptable. Compared to the more traditional process we described in
Section~\ref{sec:how-procedure-based}, the likelihood of getting the
complete ballot rejected is much smaller. Furthermore, we do not have
to deal with the issues and flaws of this method.

For deciding on motions in a board or parliament, there are two
traditional approaches: (1)~Everybody has to approve or disapprove the
motion. (2)~In addition, one can abstain from voting. While (1)
resembles grading with two grades, (2) resembles grading with three
grades. Since grading with two grades when performed simultaneously
would lead to approval voting, which, due to wrong incentives, we
already classified as undesirable for our purpose, (2) can be used to
perform Majority Judgement with three grades. But as mentioned above,
the grades \emph{positive, neutral, negative} or here \emph{approve,
  abstain, disapprove} are not the optimal scale, the issue being
\emph{abstention from voting}.

\subsection{The new view on abstention from voting}
\label{sec:new-view-abstention}

Abstention from voting can have two different meanings: weak approval
or no judgement at all, e.\,g., due to lack of sufficient
information. This makes it difficult to interpret the outcome of a
poll using the categories \emph{approval, abstention,
  disapproval}. When used in an electing procedure, it could even be
disastrous. If a candidate is unknown to most voters except their
supporters, this candidate would get a positive score and could, thus,
even win the election.

Our suggestion is to rearrange the two possible meanings of
abstentions and explicitly differentiate between them. Abstention due
to lack of information should result in a lower grade than weak
approval. Boiled down to three grades, we arrive at \emph{strong
  approval}, \emph{weak approval}, and \emph{no explicit
  approval}. This means that no judgement and disapproval are lumped
together, which is in nice accordance to Balinski and Laraki who in
most of their publications stress out that no grade for an alternative
should be considered as the lowest grade for that alternative.

If we now think of a candidate unknown to the majority of the
electorate, then this person would get the lowest grade from a
majority which, thus, is their majority grade. Since we do not want
such a candidate to be elected, we have to set as minimum criterion
for being elected majority grade better than \emph{no explicit
  approval}, which means at least \emph{weak approval} by a majority
of the electorate.

\subsection{A practical procedure for electing via simultaneously
  deciding on motions}
\label{sec:pract-proc-elect}

\begin{figure}
  \centering
  \fbox{%
  \includegraphics[width=.7\linewidth]{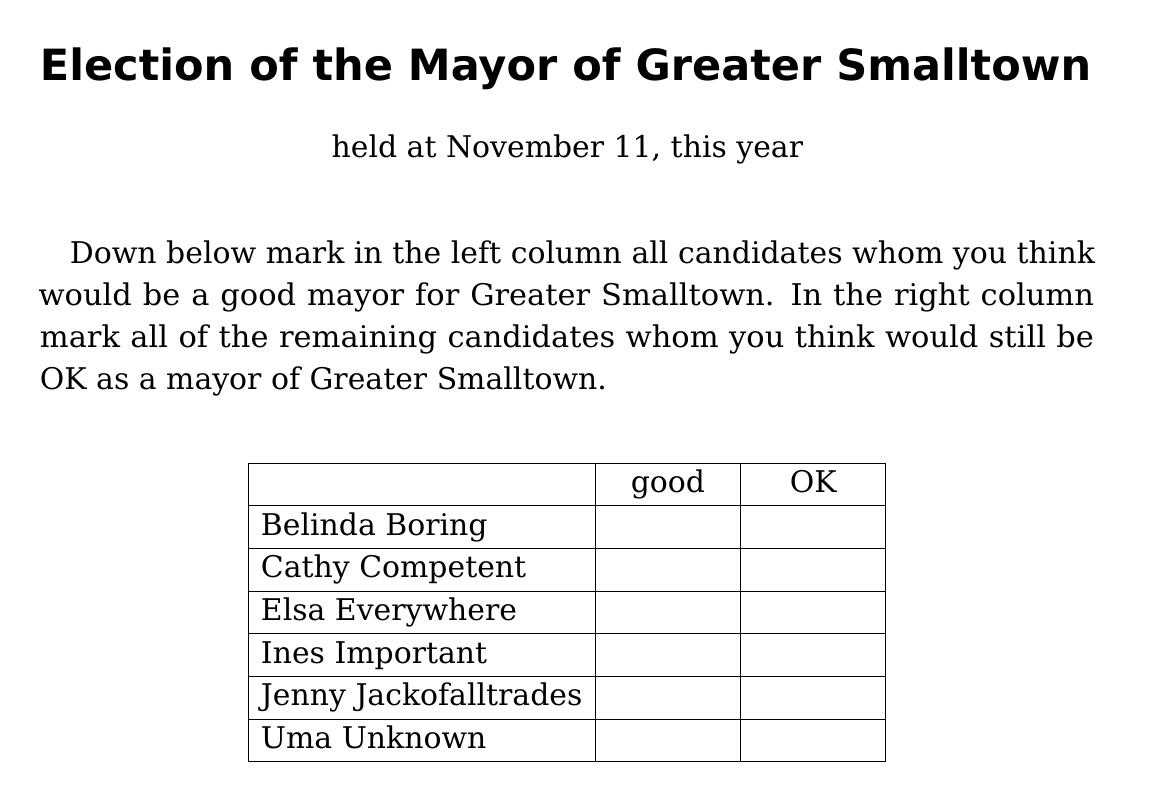}}
  \caption{Stripped down version of a ballot for simultaneously
    deciding on motions. Usually one would provide more information on
    the candidates, e.\,g.\ their party, profession, etc.}
  \label{fig:ballot}
\end{figure}

With this, the first step in political elections, namely the
nomination of candidates needs little to no changes. On the ballot,
voters are requested to first mark in the left column all
candidates whom they strongly approve. Then, they are asked to mark
in the right column all remaining candidates, they still weakly
approve.
The levels of approval should be formulated via grades like \emph{good} and
\emph{OK} or \emph{A--C} and \emph{D--E}. In Figure~\ref{fig:ballot},
we show a rather simple example of a ballot for this procedure.

In the end, for each alternative two scores are determined: The
number of votes with strong approval and the summed up number of votes
with any approval. Here, we deviate from the usual score for Majority
Judgment with three grades: Since we do not ask for negative votes,
there should also be no negative scores.
The results are then determined as follows:
\begin{itemize}
\item If no alternative is approved at least in the weak sense by a
  majority, the election has to be repeated and new candidates need to
  be found.
\item Otherwise, the candidates with a higher number of votes with
  strong approval than votes without explicit approval are ranked
  according to the votes with strong approval.

  Ties are decided by the number of votes with weak approval, i.\,e.\
  by the summed up approval. 
\item Thereafter, the other candidates are ranked according to their
  summed up approval.

  Ties are decided by the number of votes with strong approval. 
\end{itemize}
Since there will be a winner as soon as there is one candidate on the
ballot who is graded as at least weakly approved by a majority of the
electorate, the likelihood of the complete ballot being rejected is
much smaller than for the process described in
Section~\ref{sec:how-procedure-based}. As a consequence, chances for a
party to obstruct the election by nominating only poor candidates,
e.\,g.\ if their favorite is already in office but has no right to be
reelected due to having served the maximum number of terms already,
are close to zero. On the contrary, they are forced to propose good
candidates because otherwise another party's candidate would win the
election. As we will point out in the next section, there are even
more desirable properties of the new procedure.

\subsection{Properties of the new procedure}
\label{sec:prop-new-proc}

Since the new procedure is based on Majority Judgement with three
grades, it shares all of its favorable properties. It does not suffer
from the No Show Paradox, it is weakly consistent
(Theorem~\ref{thm-cons}), strategy proof etc.

Since the method allows for the whole ballot to be rejected, it is
\emph{complete}. As can be seen from above description, it is also
\emph{feasible}. And the ballot will not be very different from a
traditional ballot (see example in Figure~\ref{fig:ballot}). We find
that it is way simpler to fill out than for Instant Runoff, which is
one of the rare modern methods already in use (e.\,g.\ in Australia
and Ireland). Furthermore, the results are represented in a way that
is easily understood by the public, as can be seen in
Figure~\ref{tab:results}. An alternative would be the graphical
representation we used throughout this paper or a combination of
both. It is of course important that every voter can easily comprehend
how the ranking was derived from the votes.

In Figure~\ref{tab:results}, the ranking is divided in several
blocks. The first block (here Rank 1--3) consists of the candidates
with more votes giving them strong approval than no explicit
approval. They are ranked according to the first column. All other
candidates are ranked by the sum of the first and second column or
with increasing number of votes without explicit approval, i.\,e.\
according to the last column. These candidates, again, are divided
into two blocks: those who would be elected in the absence of the
higher ranked candidates and those who would not be elected even in
the absence of all other candidates.

While the given names in the example are mere alliterations to the
surnames, the surnames themselves are chosen as to illustrate the
expected results for the candidates. The example shows that a person
who is unknown to a vast majority of the electorate will never be
elected, just as we desired. As can be seen from the fifth ranked
candidate, it might be fatal to polarize the electorate. On the other
hand, the third and fourth ranked candidates indicate that going for
the opposite is also not a winning strategy.  These considerations
lead us already to the incentives, which the method sets for the
candidates, especially those already in office who want to be
reelected.

\begin{figure}
  \centering
  \fbox{%
    \begin{minipage}{.8\linewidth}
    Ranking of the candidates for the election as a mayor
  of Greater Smalltown. The first ranked candidate is the winner of
  the vote; the results are given in percentages:
  \begin{center}
    \begin{tabular}{|c|l|c|c|c|}
      \hline
      Rank & Candidate & Good & OK & None \\
      \hline\hline
      1 & Cathy Competent &  50 &  20 & 30 \\
      \hline
      2 & Jenny Jackofalltrades & 45 & 35 & 20 \\
      \hline
      3 & Elsa Everywhere & 25 & 60 &  15 \\
      \hline
      \hline
      4 & Belinda Boring & 10 & 80 & 10 \\
      \hline
      5 & Ines Important & 44 &  10 & 46 \\
      \hline\hline
      6 & Uma Unknown & 16 & 1 & 83 \\
      \hline
    \end{tabular}
  \end{center}
  \end{minipage}
}
  \caption{Presentation of election results as a table.}
  \label{tab:results}
\end{figure}

But before we further investigate the \emph{incentives} for the
candidates, we want to discuss the incentives for those who put
together the ballot. It is one of the major purposes of political
parties to find suitable candidates for an office like a mayor, member
of parliament etc. Due to being sensitive to irrelevant alternatives,
plurality voting has shown a tendency to discourage parties from
proposing candidates since that might bear the danger of having
someone from the opposite end of the political spectrum being elected,
especially if other parties from the own side of the spectrum propose
candidates. Thus, the voters are only offered a restricted choice of
candidates. whereas simultaneously deciding on all candidates is more
encouraging to parties to propose candidates. Since tactical
considerations are not applicable to the process, the parties are also
encouraged to propose high quality candidates. Having someone of lower
quality run for office just because they have some merits in the party
is no successful strategy. Also withdrawing the candidacy of someone,
who is already successful in office, just because some people in the
party dislike that person, has much more dangerous consequences than
with plurality voting.

A few of the incentives we have already discussed above at hand of the
example shown in Figure~\ref{tab:results}: Polarizing the electorate
might be fatal, much like the opposite strategy, and without being
known to most of the electorate, there is little chance to win the
vote. We now look at what might happen when a candidate, already in
office and hoping for reelection, is faced with a critical decision
that might polarize the electorate. In order to investigate this
scenario, without the advantage of empirical data, we have to resort
to some kind of symmetry considerations like we usually do in
probability theory, e.\,g.\ for modelling a dice where we assume each
possible result to have the same probability. In the same way, we now
assume that the electorate polarized in a symmetrical way: the
resulting shift from weak approval to strong approval and to no
explicit approval, in this case rejection, is of the same size. This
means that the difference between votes with strong approval and votes
without explicit approval remains unchanged. If the number of votes
with strong approval was larger than of those without explicit
approval, it will stay larger. In this case, which is most likely for
a person already elected in a previous election, the relevant number
is the strongly approving votes and will therefore increase. Thus,
this candidate's chances for reelection increase. People who are
highly popular will be rewarded for bold decisions. This is
illustrated as follows:

\vspace{.5em}

\noindent Before critical decision:\\
\Balken{green}{7em}\Balken{cyan}{8em}\Balken{gray}{5em}

\noindent After critical decision:\\
\Balken{green}{9em}\Balken{cyan}{4em}\Balken{gray}{7em}

\vspace{.5em}

On the other hand, if a person got elected in the previous election
although not being popular, i.\,e.\ although having gotten less votes
with strong than without any approval, they are discouraged from
making polarizing decisions. This is in fact a good property of the
election procedure: Bold decisions will be made in directions that are
approved by the majority rather than in a direction that most people
reject. How strong this desirable effect is has to be studied at hand
of empirical data, e.\,g., once such a process is in practice and
empirical data can be more easily collected. Although it is possible
that the empirical results call for modifications of our procedure, we
are confident that it is already very strong concerning incentives,
and we are sure that in this respect, it is far superior to any other
procedure in use.

\section{Conclusions}
\label{sec:conclusions}

In this paper, we proposed a new method for elections in modern
democratic states. For this purpose, we changed the point of view on
elections in several ways: We took into account the whole set of
theoretically electable people. We took into account the incentives,
the method sets for people in office and wanting to be reelected, and
we renounced the setting of Social Choice by viewing the actual
election at the end of the process as simultaneously deciding on
motions. We also proposed new criteria an election procedure should
satisfy: completeness, feasibility, and good incentives. With
traditional approaches, it is impossible to achieve these at the same
time. Especially, completeness and feasibility would be
contradictory. As we illustrated, the new method is easy to implement,
and the results are easy to understand for the voters. The method at
the same time gives the voters the opportunity to reject the complete
ballot and the political parties good incentives to propose candidates
that will not be rejected. Finally, given incentives set by the
method, people in office like mayors, members of parliaments,
etc. will be encouraged to follow their beliefs instead of tactical
thoughts about the next election.

Since we employ a version of Majority Judgment with three grades for
our procedure, we also looked at the theory of this method and proved
that for reasonable decompositions of the electorate, the method is
consistent. Together with the other desirable properties of MJ with
three grades, namely the absence of the No Show Paradox, independence
of irrelevant alternatives, and robustness against tactical voting,
etc., this makes the suggested process well acceptable for
legislature. Thus, this study might help to motivate legislatures all
over the world to replace plurality voting by modern methods.

\bibliographystyle{amsplain} \bibliography{Wahlen}

\end{document}